\def\BibTeX{{\rm B\kern-.05em{\sc i\kern-.025em b}\kern-.08em
    T\kern-.1667em\lower.7ex\hbox{E}\kern-.125emX}}
\DeclareMathOperator*{\argmin}{arg\,min}
\newtheorem{proposition}{Proposition}
\begin{document}

\title{UAV Swarms for Joint Data Ferrying and Dynamic Cell Coverage via Optimal Transport Descent and Quadratic Assignment
}

\author{Kai Cui, Lars Baumgärtner, Burak Yilmaz, Mengguang Li, Christian Fabian, Benjamin Becker, \\Lin Xiang, Maximilian Bauer, Heinz Koeppl
\thanks{This work has been co-funded by the LOEWE initiative (Hesse, Germany) within the emergenCITY center, the State of Hesse and HOLM as part of the "Innovations in Logistics and Mobility" programme of the Hessian Ministry of Economics, Energy, Transport and Housing (HA project no.: 1010/21-12), and the Hessian Ministry of Science and the Arts (HMWK) within the projects "The Third Wave of Artificial Intelligence - 3AI" and hessian.AI.}
\thanks{The authors are with the Departments of Electrical Engineering and Information Technology, Computer Science, and Mechanical Engineering, Technische Universität Darmstadt, 64287 Darmstadt, Germany. (e-mail: {\tt\small  \{heinz.koeppl\}@tu-darmstadt.de}).}}

\maketitle

\begin{abstract}
Both data ferrying with disruption-tolerant networking (DTN) and mobile cellular base stations constitute important techniques for UAV-aided communication in situations of crises where standard communication infrastructure is unavailable. For optimal use of a limited number of UAVs, we propose providing both DTN and a cellular base station on each UAV. Here, DTN is used for large amounts of low-priority data, while capacity-constrained cell coverage remains reserved for emergency calls or command and control. We optimize cell coverage via a novel optimal transport-based formulation using alternating minimization, while for data ferrying we periodically deliver data between dynamic clusters by solving quadratic assignment problems. In our evaluation, we consider different scenarios with varying mobility models and a wide range of flight patterns. Overall, we tractably achieve optimal cell coverage under quality-of-service costs with DTN-based data ferrying, enabling large-scale deployment of UAV swarms for crisis communication.
\end{abstract}

\begin{IEEEkeywords}
UAV swarms, data ferrying, cell coverage, alternating minimization, quadratic assignment problem
\end{IEEEkeywords}

\section{Introduction}
In recent years, Unmanned Aerial Vehicles (UAV) have become essential tools for professional first responders thanks to their mobility and versatility. 
Today they are mainly used for taking aerial pictures of disaster areas and sensor readings autonomously. 
In our highly connected world, communication is vital for coordination of civilians, professional responders and even (partially) autonomous systems or IoT. 
Thus, when communication infrastructure is disrupted, alternative means of communication must be quickly established. 
For such challenging conditions where connectivity is intermittent or unavailable for long periods, Delay-Tolerant Networking (DTN) provides a commonly used way to enable resilient communication without stable end-to-end connectivity.
In this store-carry-and-forward approach, data is stored at intermediate nodes and forwarded to the next node until connectivity is available.
Thus, message delivery depends on the mobility of the participating nodes, which act as "data mules" to physically transfer the data.
UAVs can be rapidly deployed to affected areas during emergencies such as natural disasters.
By carrying wireless communication equipment, UAVs can provide temporary communication coverage to DTNs and data ferrying in remote or disaster-stricken areas.
UAVs can also fly to areas difficult to access, such as mountainous or forested areas, or survey areas and collect data useful for disaster response. 

Even though researchers have proposed UAVs as data ferries in the past~\cite{zobel_topology-aware_2020, udroiu_data_2021, arafat_location-aided_2018, yoon_adaptive_2017}, there remain many open challenges for practical deployments.
Finding optimal coverage of the nodes on the ground is essential. 
If the locations of ground nodes are not known, flight patterns must either be optimized for area coverage, or a sweep of the area must be performed to detect nodes before providing coverage. 
Furthermore, nodes on the ground might also be moving, and different radio link technologies have different characteristics such as communication range, bandwidth, etc. 
Limited resources available on the UAV, e.g., data storage capacities and battery power, are major challenges to take into account when planning flight trajectories and missions.
Finally, a swarm of UAVs can increase the communication capabilities significantly, but requires additional effort to coordinate and cooperate.
Here, we address the optimal coverage of nodes in a tractable manner, amenable to large swarms and dynamically moving user nodes.

At the same time, another approach to communication via UAVs is based on the usage of UAVs as mobile cellular base stations. With the miniaturization of base station equipment, exploiting UAVs as aerial base stations to reinforce or complement cellular coverage has attracted significant interests in academia, industry, and standardization organizations such as the 3rd Generation Partnership Project (3GPP) \cite{3GPPstd}. Unlike the DTN scenario, aerial base stations are designed to provide reliable, energy-efficient cellular communication services with guarantees for e.g. communication data rates, reliability, delays, and information timeliness, even in challenging emergency situations. 
While aerial base stations can enable flexible deployments and relocation, establish strong line-of-sight (LoS) channels, and enhance resilience of ground networks amid malfunction, they also need to overcome several new challenges such as time-varying topologies, blockages in urban environments, and LoS interference \cite{Xiang20wcnc}. Moreover, UAVs are often subject to limited size, weight, power, and radio resources.

In order to deal with resource constraints in crisis scenarios, it is thus of importance to consider not only DTN-based communication, but also the availability of dynamic cellular coverage as a second layer of communication infrastructure reserved for high-priority data such as emergency calls or command and control of dynamically moving users on the ground. Therefore, it is important to enable both data ferrying and cellular coverage simultaneously, as depicted in Fig.~\ref{fig:overview}. In this work, for cell coverage, we generalize optimal transport formulation for optimal UAV swarm cell association \cite{mozaffari2017optimal} to include UAV capacities that need not be fully assigned to all users. Furthermore, in contrast to previous work, we then solve both coverage and data ferrying jointly. The latter is achieved by first solving the coverage problem via alternating minimization for dynamically tracking capacity-constrained clusters, and then applying combinatorial optimization problems such as the traveling salesman problem (TSP) and related quadratic assignment problems (QAP) for inter-cluster data ferrying.

\begin{figure}
    \centering
    \includegraphics[width=0.9\linewidth]{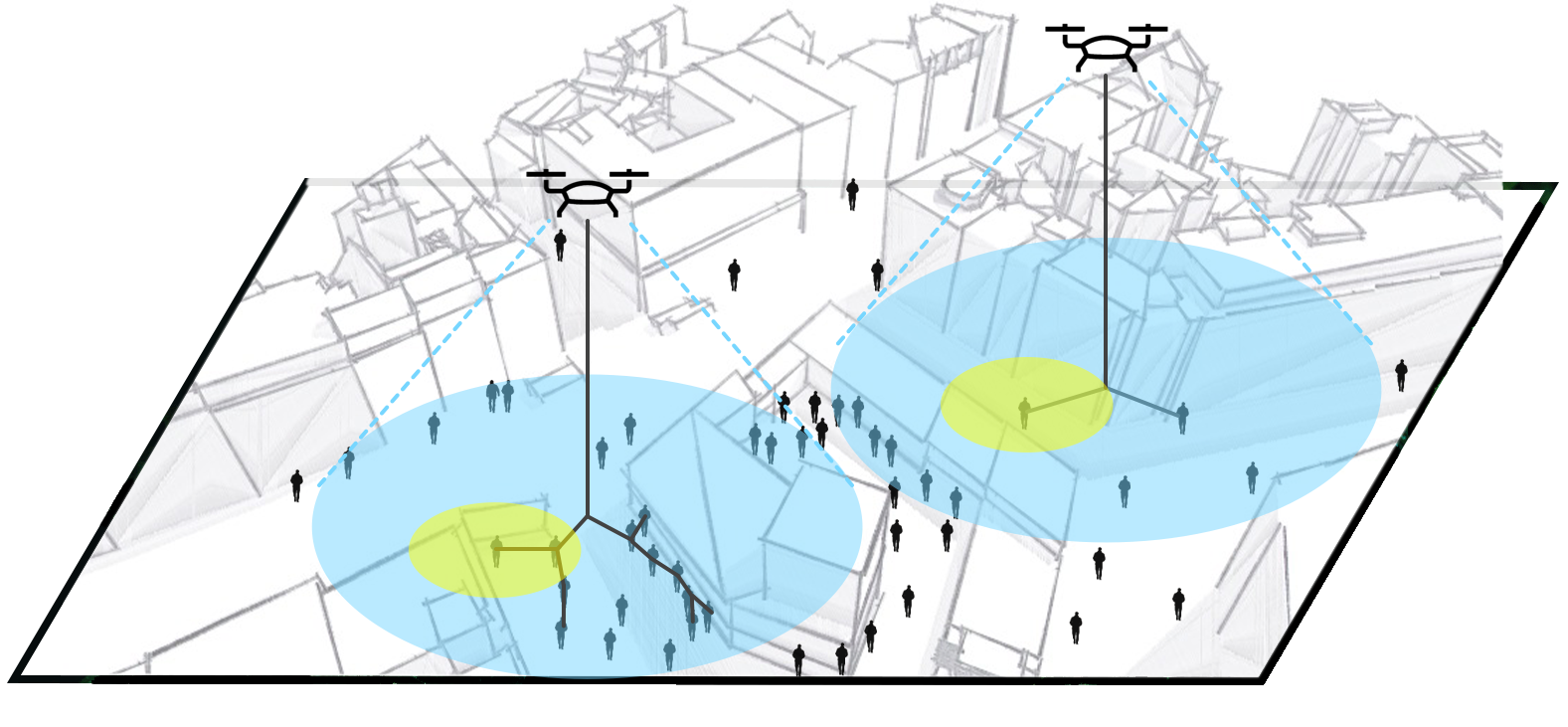}
    \caption{Overview of the proposed two-layer communication infrastructure. Blue ellipses represent the dynamic cell coverage provided by the UAVs. In contrast, grey lines indicate exemplary connections between DTN nodes, with yellow ellipses representing DTN communication range.}
    \label{fig:overview} 
\end{figure}

Our contributions can be summarized as: (i) Formulating an optimal-transport-based problem and algorithm for optimal cell coverage via UAVs with novel capacity constraints; (ii) Enabling distance-optimized and delay-optimized data ferrying by transporting data between user clusters via combinatorial optimization problems; (iii) Combining data ferrying with cell coverage in order to optimally use available UAVs for both DTN and cellular base stations, giving extensive theoretical and empirical support for the algorithm by considering different scenarios with varying mobility models and a wide range of flight patterns. To the best of our knowledge, our work constitutes one of the first to combine data ferrying with cell coverage, with applicability in resource-constrained scenarios or crises. We begin by describing the underlying scenario and our proposed algorithms. We then move on to a theoretical and empirical evaluation, including a small real demonstration over a city model using indoor UAVs. Lastly, we close by discussing related and future work.
\section{Background}
In this section, we provide a brief background on the scenario and model on which we base our algorithms. Precise values for variables used in evaluation are given in Sec.~\ref{sec:eval}.

Consider a potentially large number of users on the ground, and many UAVs for data ferrying and cell coverage with limited capacity, such that each UAV may serve only a limited number of users. We assume that at any time $t \in \mathbb N$ there are users $i = 1, \ldots, M$ at time-dependent positions $X_i(t) = (x_i(t), y_i(t)) \in \mathcal X$ in the area of operations $\mathcal X \subseteq \mathbb R^2$, following a certain mobility model as will be discussed in Sec.~\ref{sec:mobility}. This can be understood e.g. as mobility data of a city's mobile phone users over the day, or mobility behavior obtained from surveying in crisis situations. Analogously, for each UAV $j = 1, \ldots, N$ we define UAV positions $\xi_j(t) \in \mathcal X$, which can be moved at a maximum velocity $v$ in order to achieve dynamic user coverage and data ferrying. We write $\mathbf X(t)$ for the vector of all user locations, and similarly $\boldsymbol \xi(t)$ for UAV locations.

\subsection{Delay-tolerant networking}
Delay- or disruption-tolerant networking (DTN) uses a \textit{Store, Carry \& Forward architecture} to cope with intermittent network connectivity. 
Data is transmitted as bundles on a hop-to-hop basis, thus, no stable end-to-end route between a source and a destination is needed.
Instead, intermediate nodes act as data mules, physically carrying data around until it reaches its destination.
Therefore, besides routing strategies such as epidemic flooding and PRoPHET \cite{lindgren2003probabilistic}, node mobility plays an essential role in bundle delivery rates for such networks.
The official standard protocol as defined by the \textit{IETF DTN Working Group} is the \textit{Bundle Protocol (BP)} version 7 (RFC 9171)~\cite{RFC9171}. 

The first goal of our work is thus to use UAVs as controllable mobile nodes to improve the DTN message delivery rates and delays. In our work, we consider epidemic flooding via WiFi, using the TheONE simulator \cite{keranen2009one}, where we generate one message every $T_{\mathrm{DTN}} > 0$ time units at a random ground user, with another random ground user as destination.

\subsection{Cell coverage}
In contrast to DTN, cell coverage defines the problem of covering all users optimally at any time, using UAVs as mobile base stations. For example, one could provide LTE coverage via UAVs \cite{lin2018sky} for uplink and downlink data communication to a limited number of ground users within range $r > 0$.

To solve the problem of providing optimal cell coverage, we assume that the underlying cell association problem of associating ground users to UAVs at any specific time instant is solved optimally through an optimal transport formulation. More specifically, we conceptually generalize the optimal transport procedure introduced in \cite{mozaffari2017optimal} by allowing UAV base stations to have arbitrary constraints on total capacity, i.e. not all UAVs must be fully assigned to users, and vice versa. 

\subsubsection{Optimal cell association}
Using a cost function $d$ such as distance on $\mathcal X$ to define costs $d(X, \xi)$ of associating users at position $X$ with a UAV at position $\xi$, we define the optimal transport cost $W_d \left( \frac 1 M \sum_{i=1}^M\delta_{X_i(t)}, \frac 1 N \sum_{j=1}^N \delta_{\xi_j(t)} \right)$ (see e.g. \cite{villani2009optimal}) between the locations (empirical distribution) of users $\frac 1 M \sum_{i=1}^M\delta_{X_i(t)}$ and UAVs $\frac 1 N \sum_{j=1}^N \delta_{\xi_j(t)}$. Choosing
\begin{align}
    d(X, \xi) = \mathbf 1_{(r, \infty)}(\lVert X - \xi \rVert_2),
\end{align}
allows the optimal transport cost to formalize the notion of best achievable coverage under optimal cell association \cite{mozaffari2017optimal} for a maximum communication range of $r > 0$, assuming that the total capacity of all UAVs is equal to the total number of users on the ground, i.e. all UAVs must be fully assigned to users.

\subsubsection{Capacity constraints}
So far, the optimal transport formulation assumes the same amount of mass between UAVs and users, i.e. all UAVs must be fully assigned to users, and vice versa. To model differing capacity cases, we formally embed the user positions in the extended space $\bar{\mathcal X} \coloneqq \mathcal X \times \{0, 1\}$ by $\bar X(t) \coloneqq (X(t)^T, 0)^T$, and similarly for UAVs $\bar \xi(t) \coloneqq (\xi(t)^T, 0)^T$. We then instead use
\begin{align}
    d(\bar X, \bar \xi) = \mathbf 1_{(r, \infty)}(\lVert \bar X - \bar \xi \rVert_2) \cdot \mathbf 1_{\{1\}}(\bar X_3) \cdot \mathbf 1_{\{1\}}(\bar \xi_3),
\end{align}
where $\cdot_3$ denotes the last component (zero for real UAVs and users), to allow for UAVs to not be fully assigned to users (in case of more capacity than needed by users): This is done by adding virtual users at $(0, 0, 1) \in \bar{\mathcal X}$ to obtain the new cost
\begin{multline} \label{eq:tcost}
    c(\mathbf {\bar X}(t), \bar{\boldsymbol \xi}(t)) = W^{\mathrm{vec}}_d(\mathbf {\bar X}(t), \bar{\boldsymbol \xi}(t)) \coloneqq \\ 
    W_d \left(\frac 1 {MR} \sum_{i=1}^M \delta_{\bar X_i(t)} + (1 - \frac{1}{R}) \delta_{(0, 0, 1)}, \frac 1 N \sum_{j=1}^N \delta_{\bar \xi_j(t)} \right)
\end{multline}
which formalizes the best achievable coverage for higher capacity of UAVs than required by users, i.e. if more capacity is available than required by users, the unused UAV capacity is assigned to a virtual user at zero cost. Here we add a mass of $(1 - \frac{1}{R})$ virtual users, where $R > 1$ denotes the ratio between total UAV capacity and total ground users. An analogous argument with virtual UAVs and $R < 0$ allows for the case with less total UAV capacity than required by users.

\subsubsection{Quality of service (QoS)}
Lastly, in practice, another point of consideration is quality of service (QoS), e.g. the quality of downlink or uplink communication. We may optimize e.g. the sum of both the number of covered users and their QoS, using
\begin{multline}
    d_{QoS}(\bar X, \bar \xi) = \mathbf 1_{(r, \infty)}(\lVert \bar X - \bar \xi \rVert_2) \cdot \mathbf 1_{\{1\}}(\bar X_3) \cdot \mathbf 1_{\{1\}}(\bar \xi_3) \\
    \quad + \lVert \bar X - \bar \xi \rVert_2 \cdot \mathbf 1_{[0, r]}(\lVert \bar X - \bar \xi \rVert_2)
\end{multline}
where QoS cost increases linearly with distance, i.e. QoS decays linearly. In practice, one could use e.g. a signal-to-noise ratio.

\section{Algorithm design}
Next, we discuss the approach used in our work to achieve scalable algorithms with decentralized deployment capability for (i) maximizing cell coverage, and (ii) ferrying data between cell clusters in synchronous updates.

\subsection{Optimal cellular coverage}
Our model provides a mathematical foundation to the cell coverage problem where UAVs have a limited capacity of users. What remains is to find optimal UAV locations $\boldsymbol \xi(t)$ at any time $t$. We propose an iterative assignment method which finds local optima with theoretical convergence guarantees of the otherwise hard problem. More specifically, we apply alternating minimization (AM) on an optimal transport formulation of the coverage problem, generalizing algorithms for the NP-hard $k$-means \cite{hartigan1975clustering} or $k$-medoids \cite{kariv1979algorithmic} problem in clustering, by limiting cluster sizes. A simple example is illustrated in Fig.~\ref{fig:example}.

\begin{figure}
    \centering
    \hfill{}
    \includegraphics[width=0.99\linewidth]{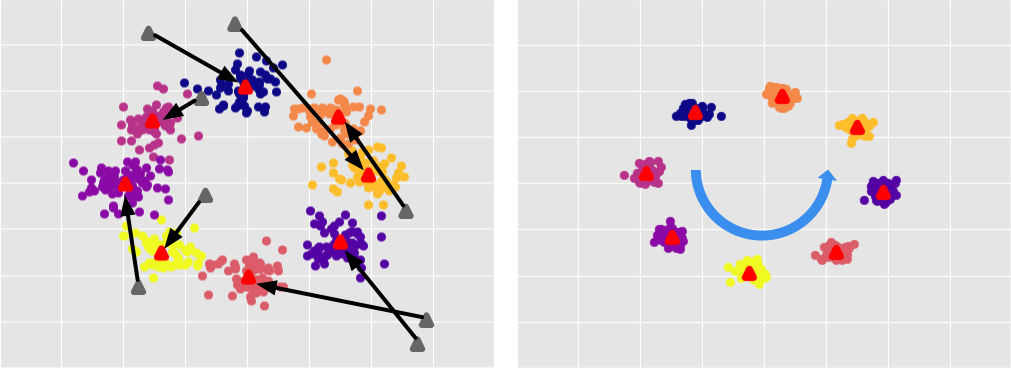}
    \hfill{}
    \caption{Simple example of 8 UAVs tracking 8 clusters of users (colored dots). UAVs start at random initial  starting points (grey triangles). Black arrows depict the moving trajectories of each UAV (red triangle). As the user clusters move counterclockwise, the UAVs successfully track the dynamic motion of the clusters.}
    \label{fig:example}
\end{figure}

We find optimal positions of UAVs $\boldsymbol \xi$ for given current locations of users, such that when the UAVs track these positions over time, the cell coverage of users is optimized. Assume a set of $M'$ users $\mathbf {\bar X} \in \bar{\mathcal X}^{M'}$, including virtual users. For any UAV (or cluster) $i$, we keep assignments $C_i \subseteq [M']$ of users to a UAV with $\mathbf C = (C_i)_{i \in [N]}$. If $\mathbf C$ constitutes the currently optimal assignment of users to UAVs (an optimal transport plan \cite{villani2009optimal}) given that the UAVs are located at $\boldsymbol {\bar \xi}$, then the optimal transport cost \eqref{eq:tcost} is given by
\begin{align} \label{eq:Jcost}
    J_{\mathbf X(t)}(\mathbf C, \boldsymbol \xi) \coloneqq \sum_{i \in [N]} \sum_{j \in C_i} d(\bar \xi_i, \bar X_j) = W^{\mathrm{vec}}_d(\mathbf {\bar X}(t), \bar{\boldsymbol \xi}(t)).
\end{align}

Therefore, we need to minimize \eqref{eq:Jcost} over all drone locations $\boldsymbol \xi$ and associated optimal transport plans $\mathbf C$. For this purpose, we apply an AM algorithm to minimize the optimal transport coverage problem by iterating
\begin{align}
    \mathbf C^{(n+1)} = \argmin_{\mathbf C} J_{\mathbf X(t)}(\mathbf C^{(n)}, \boldsymbol \xi^{(n)}) \label{eq:alt1}\\
    \boldsymbol \xi^{(n+1)} = \argmin_{\boldsymbol \xi \in \mathcal X^N} J_{\mathbf X(t)}(\mathbf C^{(n+1)}, \boldsymbol \xi^{(n)}) \label{eq:alt2}
\end{align}
over iterations $n$. The algorithm monotonically improves \eqref{eq:Jcost} and is thus guaranteed to converge. In other words, our algorithm repeatedly computes optimal transport solutions in \eqref{eq:alt1}, and reassigns locations of UAVs in \eqref{eq:alt2} to monotonically improve the current assignment of locations, which gives us at any time $t$ a locally optimal set of UAV locations. 

In the first step \eqref{eq:alt1}, we compute optimal transport plans $\mathbf C$ for fixed $\boldsymbol \xi$ via the POT library \cite{flamary2021pot}, which uses a linear program formulation as in \cite{bonneel2011displacement}. However, the exact computation of minimal $\boldsymbol \xi^{(n+1)}$ in the second step \eqref{eq:alt2} is out of reach via a facility location problem \cite{carlo2012capacitated}, or a Wasserstein barycenter problem \cite{borgwardt2021computational} even in the limiting relaxation of infinitely many UAVs (i.e. a large mean field UAV swarm). Hence, we instead use the medoids approach of choosing cluster centroids from a restricted set of user positions
\begin{align}
    \boldsymbol \xi^{(n+1)} = \argmin_{\boldsymbol \xi \subseteq X(t)} J_{\mathbf X(t)}(\mathbf C, \boldsymbol \xi) \label{eq:alt2p}
\end{align}
where we write $\boldsymbol \xi \subseteq X(t)$ whenever for all $i = 1, \ldots, N$ we have $\xi_i \in \{X_1(t), \ldots, X_M(t)\}$. The advantages of this approach are that we can easily compute the minimum by summing over distances and choosing the minimum
\begin{align}
    \argmin_{\boldsymbol \xi \subseteq X(t)} J_{\mathbf X(t)}(\mathbf C, \boldsymbol \xi) = \bigtimes_{i=1}^N \argmin_{\xi_i \in X(t)} \sum_{j \in C_i} d(\bar \xi_i, \bar X_j),
\end{align}
and that it ensures user-to-UAV connectivity also for DTN purposes by tracking one ground user's position via UAVs.

\begin{algorithm}[b]
    \caption{AM for optimizing centroid positions}
    \label{alg:am}
    \begin{algorithmic}[1]
        \STATE Input: Initial centroids $\boldsymbol \xi^{(0)}$.
        \FOR {iterations $n=1, \ldots, N_{\mathrm{AM}}$}
            \STATE Compute $\mathbf C^{(n)} = \argmin_{\mathbf C} J_{\mathbf X(t)}(\mathbf C^{(n-1)}, \boldsymbol \xi^{(n-1)})$.
            \STATE Compute $\boldsymbol \xi^{(n)} = \argmin_{\boldsymbol \xi \in \mathcal X^N} J_{\mathbf X(t)}(\mathbf C^{(n)}, \boldsymbol \xi^{(n-1)})$.
        \ENDFOR
        \RETURN Optimal centroid locations $\boldsymbol \xi^{(N_{\mathrm{AM}})}$
    \end{algorithmic}
\end{algorithm}

As a side remark, in order to avoid oscillating assignment of users that are out of reach to different clusters, in the algorithm we use a slightly adapted, "leaky" QoS distance measure
\begin{multline}
    \hat d(\bar X, \bar \xi) = \mathbf 1_{(r, \infty)}(\lVert \bar X - \bar \xi \rVert_2) \cdot \mathbf 1_{\{1\}}(\bar X_3) \cdot \mathbf 1_{\{1\}}(\bar \xi_3) \\
    \cdot (r + 0.01 \lVert \bar X - \bar \xi \rVert_2)
    + \lVert \bar X - \bar \xi \rVert_2 \cdot \mathbf 1_{[0, r]}(\lVert \bar X - \bar \xi \rVert_2)
\end{multline}
that is not constant on $\lVert \bar X - \bar \xi \rVert_2 > r$, to ensure cluster stability.

We assign UAVs to track the desired cluster locations. We update clusters dynamically by rerunning the algorithm after each time interval of $\Delta T$ time units. Warm starting with the last centroid positions keeps the cluster locations stable over time even if users are moving, allowing for UAVs to consistently track a moving cluster. This is important, since the users on the ground move over time (Fig.~\ref{fig:example}), and we would avoid clusters to be bound to specific users or locations. Instead, clusters dynamically readjust as underlying users move, allowing for certain users to switch from one cluster to another, or for the centroid to move with the users.

\subsection{Optimal inter-cluster DTN data ferrying}
All that remains is to rotate UAVs over clusters to facilitate data ferrying between clusters. Assuming that DTN messages are spread among users of any single cell cluster, the goal is to transport data between all clusters. We realize this diffusion of data by fixing an update interval $T_{\mathrm{rot}} > 0$ after which all UAVs change from their current cluster to another. Here, sufficiently high $T_{\mathrm{rot}}$ allows for good coverage during tracking of clusters by UAVs, and each rotation of UAV-to-cluster assignments also allows for a good spread of DTN messages. We introduce two formulations for optimization of energy and DTN latency.

\begin{figure}
    \centering
    \hfill{}
    \includegraphics[width=0.8\linewidth]{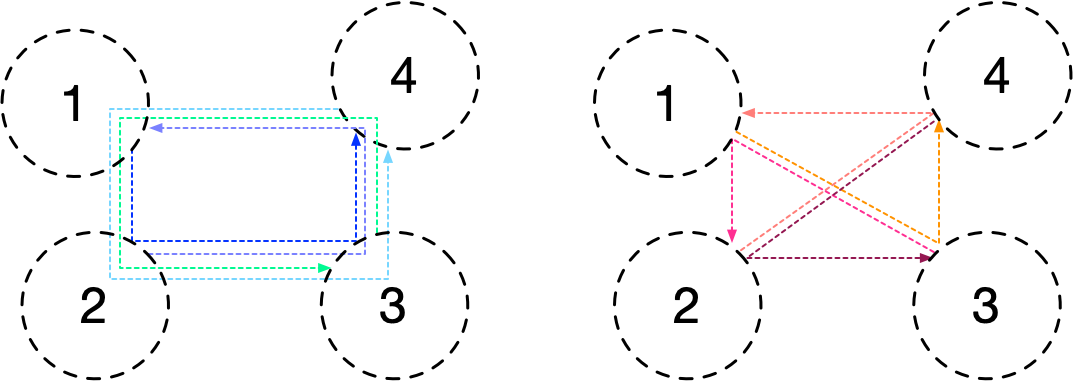}
    \hfill{}
    \caption{Two optimization formulations of data ferrying among clusters $1$ to $4$, and the resulting $4$ UAV routes. The blue lines (left) indicate TSP-based rotation with $3$ jumps per UAV, and the orange lines (right) indicate binary-jumping rotation with $2$ jumps per UAV.}
    \label{fig:modes}
\end{figure}

\subsubsection{TSP-based rotation}
Given $T_{\mathrm{rot}}$, a simple solution that minimizes energy costs associated with the total distance travelled by UAVs, and still ensures that all UAVs eventually reach all clusters, is given by the solution to a TSP. We find the shortest cycle in terms of distance travelled at the time of computation, and let all UAVs rotate through the cycle to eventually spread all messages between all clusters. 

As a result, if users are static or slow in comparison to the update time $T_{\mathrm{rot}} > 0$, the total distance travelled by UAVs is
\begin{align} \label{eq:d1}
    D_{\mathrm{tot}} = (N-1) \langle \mathbf D, \mathbf W \rangle_F = (N-1) \operatorname{tr}( \mathbf D \mathbf W ),
\end{align}
assuming that any $i$-th and $(i+1)$-th clusters are connected on the cycle (see also Fig.~\ref{fig:modes}), where $\mathbf D$ is the current distance matrix between cluster centroids $D_{ij} = d(\xi_i, \xi_j)$, $\mathbf W$ is the weight matrix with entries $w_{ij} = 1$ whenever $j = i + 1 \mod N$, and $\langle \cdot, \cdot \rangle_F$ denotes the Frobenius inner product defined by elementwise multiplication and summing of all entries.

By considering any possible permutation of cluster indices, the TSP can then briefly be written as the QAP \cite{cela2013quadratic}
\begin{align}
    \min_{\pi \in \mathrm{Sym}([N])} \operatorname{tr}( \mathbf D \mathbf P_\pi \mathbf W \mathbf P_\pi^T)
\end{align}
where $\pi$ is any permutation of $[N] \coloneqq \{1, \ldots, N\}$ from the symmetry group $\mathrm{Sym}$ of $[N]$, and $\mathbf P_\pi$ its corresponding permutation matrix. The TSP can then be solved using any standard method, for which we use OR-Tools \cite{ortools}.

An advantage of this method is that the distances travelled by UAVs during rotation are minimized, i.e. the method saves energy. A disadvantage is its slow diffusion of information between clusters, as it takes up to $N-1$ rotations to transport an existing message from one cluster to another. A variant of this method is that we can even use less UAVs than required to cover all clusters by assigning each UAV to track a cluster, and filling the other clusters with virtual UAVs. However, a disadvantage of using less UAVs than clusters is that full coverage is no longer achieved, making this solution suitable only if not enough UAVs are available to achieve full coverage of users. The full algorithm is given in Algorithm~\ref{alg:full}.

\begin{algorithm}[b]
    \caption{Joint data ferrying and cell coverage}
    \label{alg:full}
    \begin{algorithmic}[1]
        \STATE Initialize cluster centroids $\boldsymbol \xi^*$ and UAV-cluster assignments $\alpha \in \mathrm{Sym}([N])$.
        \FOR {each epoch $n=0, 1, \ldots, T_{\mathrm{tot}}/\Delta T$}
            \STATE Run Algorithm~\ref{alg:am} to obtain new centroids $\boldsymbol \xi^*$ (initialize with current $\boldsymbol \xi^*$).
            \IF{Rotation of UAVs has finished} 
                \STATE Compute new solution permutation $\pi$ to QAP using OR-Tools \cite{ortools} or Algorithm~\ref{alg:ga}. 
            \ENDIF
            \IF{Rotation update time $T_{\mathrm{rot}}$ has passed} 
                \STATE For TSP: Rotate UAVs by $\alpha = \alpha + 1 \mod N$.
                \STATE For binary-jumping: Rotate UAVs by $\alpha = \alpha + 2^{\lceil \log N \rceil - (n \mod \lceil \log N \rceil)} \mod N$.
            \ENDIF
            \FOR {each UAV $i=1,\ldots,N$}
                \STATE Command to move UAV $i$ towards cluster $\xi^*_{\pi(\alpha(i))}$.
            \ENDFOR
            \STATE Pass $\Delta T$ time units.
        \ENDFOR
    \end{algorithmic}
\end{algorithm}

\subsubsection{Binary-jumping rotation}
As an alternative, one may consider a more delay-optimized ferrying behavior which ensures that any existing message takes at most $\lceil \log N \rceil$ iterations to rotate from one cluster to another. The solution is to rotate all UAVs by $2^{\lceil \log N \rceil-1}$ clusters, then by $2^{\lceil \log N \rceil-2}$ and so on. Assuming good intra-cluster connectivity, this ensures that in each rotation (except potentially the last), the messages cached in any cluster are duplicated from one cluster to another that does not have the messages yet, leading to an optimal, exponentially fast spread of messages. The total distance travelled is then given by
\begin{align} \label{eq:d2}
    D_{\mathrm{tot}} = \langle \mathbf D, \hat{\mathbf W} \rangle_F = \operatorname{tr}( \mathbf D \hat{\mathbf W} ),
\end{align}
where $\hat{\mathbf W}$ is now the weight matrix parametrizing all jumps with entries $w_{ij} = 1$ whenever $j \in \{i + 2^k \mod N \mid k=1, \ldots, \lceil \log N \rceil \}$, resulting again in a QAP
\begin{align}
    \min_{\pi \in \mathrm{Sym}([N])} \operatorname{tr}( \mathbf D \mathbf P_\pi \hat{\mathbf W} \mathbf P_\pi^T).
\end{align}

The problem is similar in structure to the TSP and is likely hard, as its natural formulation is a QAP, which is known for being NP-hard in general and having few tractable special cases \cite{cela2013quadratic}. In particular, the matrix $\hat{\mathbf W}$ is a circulant matrix, and $\mathbf D$ is symmetric. However, $\mathbf D$ is neither anti-Monge nor Kalmanson, and hence does not fall into the few known, easy cases \cite{cela2013quadratic}. Since an analytic or exact solution is difficult as in TSP, we propose an approximate algorithmic solution. Though the problem is similar in structure to TSP, we are not aware of special algorithms for the considered QAP. We hence use a genetic algorithm (GA) \cite{mitchell1995genetic} with crossover operators used in TSP to solve for the optimal ordering of clusters. More precisely, we apply the well-known cycle crossover (CX) as e.g. found in \cite{umbarkar2015crossover}. The GA is given in Algorithm~\ref{alg:ga}.

\begin{algorithm}[b]
    \caption{GA for QAP $\min_{\pi} \operatorname{tr}( \mathbf D \mathbf P_\pi \hat{\mathbf W} \mathbf P_\pi^T)$.}
    \label{alg:ga}
    \begin{algorithmic}[1]
        \STATE Initialize $k$ random permutations $\pi^{(0)}_1, \ldots, \pi^{(0)}_k$.
        \FOR {iterations $n=1, \ldots, N_{\mathrm{GA}}$}
            \STATE Select the $k/2$ best permutations $\{ \pi^{(n)}_i \}_{i=1,\ldots,k/2}$ with lowest cost $F(\pi^{(n-1)}_i) \coloneqq \operatorname{tr}( \mathbf D \mathbf P_{\pi^{(n-1)}_i} \hat{\mathbf W} \mathbf P_{\pi^{(n-1)}_i}^T)$.
            \STATE Crossover best permutations with prob. $p_c = 0.8$ (CX \cite{umbarkar2015crossover}) to obtain $k/2$ children $\{ \pi^{(n)}_i \}_{i=k/2 + 1,\ldots,k}$.
            \STATE Mutate children with prob. $p_m = 0.4$ by exchanging a random pair of entries in the permutation.
        \ENDFOR
        \RETURN Best permutation $\argmin_{\pi \in \{ \pi^{(n)}_i \}_{i=1,\ldots,k}} F(\pi)$
    \end{algorithmic}
\end{algorithm}
\section{Algorithmic properties} \label{sec:theo}
We briefly state some properties of our algorithms, with proofs in the appendix for readability.

\paragraph{Convergence of AM}
We know that the AM algorithm always converges, consistently resulting in good coverage.

\begin{proposition} \label{prop:1}
Algorithm~\ref{alg:am} monotonically converges in the cost \eqref{eq:Jcost}.
\end{proposition}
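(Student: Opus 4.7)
The plan is to invoke the standard alternating minimization argument: since each of the two updates in Algorithm~\ref{alg:am} is by definition a minimization step over one argument with the other held fixed, the objective $J_{\mathbf X(t)}$ cannot increase across an iteration. Combined with a lower bound on the cost, the monotone convergence theorem will deliver the claim.

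Concretely, I would first write two inequalities that follow immediately from the $\argmin$ definitions in \eqref{eq:alt1} and \eqref{eq:alt2} (or its medoids variant \eqref{eq:alt2p}): from the $\mathbf C$-update, $J_{\mathbf X(t)}(\mathbf C^{(n+1)}, \boldsymbol \xi^{(n)}) \leq J_{\mathbf X(t)}(\mathbf C^{(n)}, \boldsymbol \xi^{(n)})$, and from the $\boldsymbol \xi$-update, $J_{\mathbf X(t)}(\mathbf C^{(n+1)}, \boldsymbol \xi^{(n+1)}) \leq J_{\mathbf X(t)}(\mathbf C^{(n+1)}, \boldsymbol \xi^{(n)})$. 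Chaining these yields $J_{\mathbf X(t)}(\mathbf C^{(n+1)}, \boldsymbol \xi^{(n+1)}) \leq J_{\mathbf X(t)}(\mathbf C^{(n)}, \boldsymbol \xi^{(n)})$, so the sequence of objective values is monotonically non-increasing.

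Next I would observe that the cost is bounded below by $0$: the summands $d(\bar \xi_i, \bar X_j)$ are non-negative (a product of indicator functions with a non-negative distance term in the QoS version). Applying the monotone convergence theorem to the non-increasing, bounded-below real sequence $\{ J_{\mathbf X(t)}(\mathbf C^{(n)}, \boldsymbol \xi^{(n)}) \}_{n \geq 0}$ then gives convergence. As a brief strengthening, I would note that under the medoids variant \eqref{eq:alt2p} both the assignments $\mathbf C$ and the centroid choices $\boldsymbol \xi \subseteq X(t)$ live in finite sets, so $J_{\mathbf X(t)}$ takes only finitely many values along the iterates and the algorithm therefore converges in finitely many iterations.

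The argument is essentially routine; the only mild subtlety is ensuring that the $\argmin$'s are well-defined so that the two monotonicity inequalities actually hold. In the medoids setting this is automatic by finiteness of the search spaces, and in the unconstrained setting \eqref{eq:alt2} an optimal transport plan and a cost-minimising centroid configuration can be taken to exist by standard compactness arguments on $\mathcal X$; in either case one may simply interpret $\argmin$ as a selection of any minimiser, which is all that is needed for the chain of inequalities.
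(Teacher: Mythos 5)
Your argument is correct and is essentially the paper's own proof: the same two $\argmin$ inequalities chained to get monotone non-increase, then non-negativity of $J_{\mathbf X(t)}$ plus monotone convergence. The added remarks on finiteness of the medoid search space and well-definedness of the $\argmin$'s are fine but not needed beyond what the paper states.
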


\paragraph{Impact of data ferrying on coverage}
In the static or quasi-static clustered regime, where UAVs are significantly faster than users, $v \gg v_{\mathrm{user}}$, we can obtain bounds on the time-averaged (QoS-free) coverage $\mathrm{\bar c} = -\int_0^\infty c(\mathbf {X}(t), \boldsymbol \xi(t)) \, \mathrm dt$.

\begin{proposition} \label{prop:2}
In the static case, for any $p_{\mathrm{cover}} > 0$, there exists $N > 0$ and $T_{\mathrm{rot}} > 0$ such that a fraction $p_{\mathrm{cover}}$ of time-averaged coverage can be achieved.
\end{proposition}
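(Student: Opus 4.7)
The plan is to decouple the argument into a \emph{spatial} piece---choose enough UAVs so that the best stationary placement covers every user---and a \emph{temporal} piece---make the rotation period much larger than the time any UAV spends in transit between clusters. Since $\mathcal X \subseteq \mathbb R^2$ is bounded, a standard covering argument gives an integer $N_0$, depending only on $\mathrm{diam}(\mathcal X)$, $r$ and the capacity ratio $R$, such that $\mathcal X$ is contained in the union of $N_0$ closed balls of radius $r$. Placing one UAV at the center of each such ball (and adding enough extra UAVs to meet the capacity ratio) produces a configuration $\boldsymbol \xi^\star$ for which, in the static case, every real user lies within distance $r$ of some UAV. The transport plan matching each real user to such a UAV and routing any surplus UAV mass to the virtual user at $(0,0,1) \in \bar{\mathcal X}$ then attains $c(\mathbf{\bar X}, \bar{\boldsymbol \xi}^\star) = 0$, so the instantaneous cost vanishes.

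For the temporal piece, let $T_{\mathrm{travel}} \coloneqq \mathrm{diam}(\mathcal X)/v$ be a uniform upper bound on the time any UAV needs to fly between two cluster centroids at maximum velocity $v$. Under either rotation rule in Algorithm~\ref{alg:full}, each period of length $T_{\mathrm{rot}}$ decomposes into a travel phase of length at most $T_{\mathrm{travel}}$ and a dwell phase of length at least $T_{\mathrm{rot}} - T_{\mathrm{travel}}$; during the dwell phase the UAVs occupy a permutation of $\boldsymbol \xi^\star$ and yield zero instantaneous cost, while during the travel phase $c(\mathbf{\bar X}(t), \bar{\boldsymbol \xi}(t)) \le 1$ trivially. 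Integrating over one period and dividing by $T_{\mathrm{rot}}$ bounds the time-averaged uncovered mass by $T_{\mathrm{travel}}/T_{\mathrm{rot}}$, so choosing $T_{\mathrm{rot}} \ge T_{\mathrm{travel}}/(1-p_{\mathrm{cover}})$ drives the time-averaged coverage above $p_{\mathrm{cover}}$.

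The delicate step is linking $\boldsymbol \xi^\star$ to what Algorithm~\ref{alg:am} actually outputs, since Proposition~\ref{prop:1} only guarantees convergence to a local minimum of $J_{\mathbf X(t)}$. I would sidestep this by observing that the proposition asserts the mere \emph{existence} of $N$ and $T_{\mathrm{rot}}$ achieving the target, so it suffices to warm-start the AM iteration at the medoidal projection of $\boldsymbol \xi^\star$ onto the user positions; monotonicity from Proposition~\ref{prop:1} preserves zero cost from that initialization onwards. The medoid projection replaces each ball center by a nearby user, which at worst doubles the effective covering radius and inflates $N_0$ by a constant geometric factor---a harmless adjustment. A minor bookkeeping subtlety is the first, unfinished rotation and, for the binary-jumping schedule, the potentially incomplete final sub-rotation; both contribute only $O(T_{\mathrm{travel}})$ to the numerator of the time average and are absorbed by enlarging $T_{\mathrm{rot}}$ slightly.
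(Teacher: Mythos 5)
Your proof is correct and follows essentially the same route as the paper: establish (near-)full coverage in the static case with sufficiently many UAVs, then bound the uncovered fraction of time by the transit time $\operatorname{diam}(\mathcal X)/v$ relative to $T_{\mathrm{rot}}$ and take $T_{\mathrm{rot}}$ large. The paper's spatial step is even simpler --- it just takes $N=M$ with singleton clusters --- and it does not tie the configuration to what Algorithm~\ref{alg:am} outputs, so your ball-covering, capacity bookkeeping, and medoid-projection/warm-start argument are additional but harmless refinements.
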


We also note that in case of at least twice the capacity necessary to cover all users, one could split UAVs into two equal sets, and rotate the locations of each set alternatingly such that one set of UAVs always provides coverage while the other provides DTN coverage. However, in this work, we focus on constrained scenarios where capacity is more limited. 

\paragraph{Bounds on data ferrying delivery time}
In the connected cluster case assuming full and instantaneous intra-cluster connectivity (i.e. messages are shared fully among participants of each cluster), we can also show bounds on the DTN delivery time for the TSP-based and binary-jumping.

\begin{proposition} \label{prop:3}
In the static connected cluster case, for any $T_{\mathrm{rot}} > \frac{\operatorname{diam}(\mathcal X)}{v}$, the maximum time-to-delivery of DTN messages does not exceed $2 T_{\mathrm{rot}} \lceil \log N \rceil$ for binary-jumping, and $2 T_{\mathrm{rot}} N$ for TSP-based rotation.
\end{proposition}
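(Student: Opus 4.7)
First I would nail down the one structural consequence of $T_{\mathrm{rot}} > \operatorname{diam}(\mathcal X)/v$: any UAV reassignment completes within a single rotation interval, so at the end of every interval the new UAV-to-cluster assignment is fully realized. Combined with the full-instantaneous-intra-cluster-connectivity hypothesis, this has the crucial consequence that when a UAV that previously served cluster $c$ arrives at cluster $c'$, it injects the entire message pool of $c$ into $c'$, while the ground nodes of $c$ retain their own copy. In other words, the set of clusters holding any given message is monotonically non-decreasing across rotations, and each rotation only adds clusters via the incoming UAV's contents.

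\textbf{TSP case.} I would track a message injected at cluster $i$ at an arbitrary time $t_0$. In the worst case $t_0$ lies just after a rotation trigger, so up to $T_{\mathrm{rot}}$ elapses before the next trigger. Each subsequent rotation shifts the message one step along the fixed cycle, because the UAV currently serving the "frontier" cluster $i+k$ moves to $i+k+1 \bmod N$ and carries the message with it. The farthest target is reached after at most $N-1$ further rotations, so the total delay is bounded by $T_{\mathrm{rot}} + (N-1)\,T_{\mathrm{rot}} = N\,T_{\mathrm{rot}} \leq 2\,T_{\mathrm{rot}}\,N$.

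\textbf{Binary-jumping case.} I would prove by induction on $k=0,1,\ldots,\lceil \log N \rceil$ that after $k$ rotations the set of clusters holding a message initially at cluster $i$ contains
\[
S_k \;=\; \Bigl\{\, i + \textstyle\sum_{j=1}^{k} b_j\,2^{\lceil \log N \rceil - j} \bmod N \;\Bigm|\; b_j \in \{0,1\} \Bigr\}.
\]
The inductive step uses that in round $k{+}1$ every UAV shifts by $2^{\lceil \log N \rceil - (k+1)}$, so each cluster in $S_k$ both keeps its copy (ground nodes retain it) and produces one at offset $2^{\lceil \log N \rceil - (k+1)}$, yielding $S_{k+1}$. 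At $k=\lceil \log N \rceil$, the bracketed integers range over the binary representations of $\{0,1,\ldots,2^{\lceil \log N \rceil}-1\}$; since $2^{\lceil \log N \rceil} \geq N$ this hits every residue mod $N$, so every cluster holds the message. Adding the worst-case initial wait of $T_{\mathrm{rot}}$ gives $(\lceil \log N \rceil + 1)\,T_{\mathrm{rot}} \leq 2\,T_{\mathrm{rot}}\,\lceil \log N \rceil$ for $N \geq 2$ (the case $N=1$ being trivial).

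\textbf{Main obstacle.} The delicate point is the binary-jumping case: one must track the correct propagation model, namely that messages persist at the source cluster after the carrying UAV departs, which is precisely what full intra-cluster connectivity buys us. Once this persistence is fixed, the combinatorial observation that binary-weighted offsets span $\{0,\ldots,2^{\lceil \log N \rceil}-1\}$ and hence all residues mod $N$ is routine. The TSP case is essentially immediate once the rotation-completion property and message persistence are established.
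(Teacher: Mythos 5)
Your propagation model (messages persist at a cluster after the serving UAV departs, and the UAV carries the whole cluster pool to its next cluster) matches the paper's, and your TSP argument is essentially sound. The genuine gap is in the binary-jumping induction: you assume that the $\lceil \log N \rceil$ rotations following the message's spawn apply the jumps in the canonical descending order $2^{\lceil \log N \rceil - 1}, 2^{\lceil \log N \rceil - 2}, \ldots, 2^0$ — this is exactly what your formula for $S_k$ and the inductive step ``in round $k{+}1$ every UAV shifts by $2^{\lceil \log N \rceil - (k+1)}$'' encode. But the jump size is tied to the global epoch counter, not to the message's age: a message spawning mid-round sees only the tail of the current round followed by the head of the next one, so in the worst case the step of your induction is false as written. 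This mid-round case is precisely what the paper's proof is built around: it waits up to a full round, $T_{\mathrm{rot}} \lceil \log N \rceil$, for the current round to finish and then lets the message spread during one complete fresh round, which is where the factor $2$ in the stated bound comes from. Your accounting budgets only $T_{\mathrm{rot}}$ of initial waiting and therefore does not cover this worst case as argued.

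The gap is repairable, and the repaired argument is actually tighter than the paper's. A rotation with jump size $s$ maps the set $S$ of clusters holding the message to $S \cup (S + s)$ (modulo $N$, in the permuted indexing), so after rotations with sizes $s_1, \ldots, s_m$ the coverage set is $i$ plus the set of all subset sums of $\{s_1, \ldots, s_m\}$ — independent of the order in which the jumps occur. Any $\lceil \log N \rceil$ consecutive rotations are a cyclic shift of the descending schedule and hence contain each power $2^0, \ldots, 2^{\lceil \log N \rceil - 1}$ exactly once, so their subset sums cover all residues modulo $N$. With this observation your bookkeeping goes through (you should also add the final in-flight leg of at most $\operatorname{diam}(\mathcal X)/v < T_{\mathrm{rot}}$ after the last trigger, which you dropped), giving delivery within about $(\lceil \log N \rceil + 1) T_{\mathrm{rot}}$, which implies the stated $2 T_{\mathrm{rot}} \lceil \log N \rceil$ for $N \geq 2$; the analogous remark gives roughly $(N+1) T_{\mathrm{rot}} \leq 2 T_{\mathrm{rot}} N$ for TSP-based rotation, again stronger than what the proposition asks for.
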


\paragraph{Time complexity}
The solutions we provided are scalable not only to many users, but also to many UAVs. For simplicity, with $R=1$ we have a theoretical complexity of $\mathcal O(\max(N,M)^3)$ with an empirical complexity of $\mathcal O(\max(N,M)^2)$ \cite{bonneel2011displacement} for computation of optimal transport solutions in Algorithm~\ref{alg:am}, and complexity $\mathcal O(N^2)$ for computing the medoids. Analogously, for data ferrying, though the TSP and binary-jumping QAP are hard in theory, in practice they can be solved as long as one can evaluate the fitness of the QAP formulation as a sum of $N \lceil \log N \rceil \in \mathcal O(N\log N)$ elements. Overall, our approach can be scaled up to large systems with many users and UAVs, which we also verify experimentally.

\section{Evaluation} \label{sec:eval}
In this section, we evaluate our algorithm, comparing between TSP-based rotation, binary-jumping, as well as some baselines. The evaluation parameters are found in Table~\ref{tab:params}.

\begin{table}[b!]
    \centering
    \caption{Default parameters for scenario, algorithm, DTN}
    \label{tab:params}
    \begin{tabular}{@{}ccc@{}}
    \toprule
    Symbol          & Name          & Value     \\ \midrule
    $\mathcal X$ & Area of operation & $[-\SI{4}{\kilo\metre}, \SI{4}{\kilo\metre}]^2$ \\
    $v$ & UAV velocity & $\SI{20}{\metre\per\second}$ \\
    $\Delta T$ & Time interval & $\SI{10}{\second}$ \\
    $T_{\mathrm{rot}}$ & Rotation interval & $6 \Delta T$ \\
    $T_{\mathrm{tot}}$ & Total time & $\SI{12}{\hour}$ \\
    $N$ & Number of agents & $10$ \\
    $M$ & Number of ground users & $100$ \\
    $R$ & Capacity ratio & $1.3$ \\
    $r$ & Cell coverage range & $\SI{1}{\kilo\metre}$ \\
    \midrule
    $N_{\mathrm{AM}}$ & AM iterations & $4$ \\
    $N_{\mathrm{GA}}$ & GA iterations & $100$ \\
    $k$ & GA generation size & $100$ \\
    \midrule
    $r_{\mathrm{DTN}}$ & DTN wireless range & $\SI{100}{\metre}$ \\
    $v_{\mathrm{DTN}}$ & DTN transmit speed & $\SI{6570}{\kilo\byte\per\second}$ \\
    $b_{\mathrm{DTN}}$ & DTN buffer size & $\SI{20}{\mega\byte}$ \\
    $T_{\mathrm{DTN}}$ & DTN message interval & $\SI{10}{\second}$ \\
    $m_{\mathrm{DTN}}$ & DTN message size & $\SI{25}{\kilo\byte}$ \\
    $t_{\mathrm{DTN}}$ & DTN TTL & $\SI{300}{\minute}$ \\
    \bottomrule
    \end{tabular}
\end{table}

\begin{figure}
    \centering
    \hfill{}
    \includegraphics[width=0.99\linewidth]{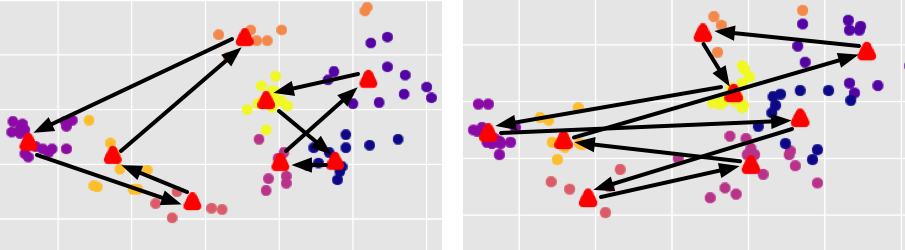}
    \hfill{}
    \caption{Snapshot as in Fig.~\ref{fig:example} for the SLAW model and binary-jumping, at two times. UAVs (red triangles) rotate through clusters of users (dots color-coded by cluster assignment).}
    \label{fig:slaw}
\end{figure}

\subsection{Scenarios} \label{sec:mobility}
For evaluation, apart from illustrative examples of Gaussian user clusters, both on a circle (Fig.~\ref{fig:modes}) and at city locations over a real city model (Fig.~\ref{fig:real}), we consider three advanced mobility models with a simulated duration of $\SI{43200}{\second}$~($\SI{12}{\hour}$). 
Scenarios contain $100$ nodes moving in an area of $\numproduct{8 x 8} \, \unit{\kilo\metre}$.
All movement traces, $10$ per mobility model, were generated using BonnMotion~\cite{aschenbruck2010bonnmotion} and the following configurations:

\paragraph{RWP}
As a simple baseline, we use a random waypoint model \cite{johnson1996dynamic} where nodes move with speeds between $\SI{1}{\metre\per\second}$ and $\SI{5}{\metre\per\second}$, with maximum pause time of $\SI{30}{\minute}$.


\paragraph{SMOOTH}
For a more realistic representation of human mobility, SMOOTH~\cite{munjal2011smooth} uses a power-law distribution of statistics with node clusters. We use $40$ clusters, an alpha for the flight distribution of $1.45$ (min: $1$, max: $14000$) and beta for the pause time distribution of $1.5$ (min: $10$, max: $1800$).


\paragraph{SLAW}
Another common model for simulating human mobility using a power-law distribution of statistics is SLAW~\cite{lee2009slaw}. Here, we use a pause time between $\SI{10}{\minute}$ and $\SI{50}{\minute}$, $500$ waypoints, a distribution weight of $3.0$, hurst of $0.75$ and a beta value of $1.0$. A snapshot is seen in Fig.~\ref{fig:slaw}.


\subsection{Numerical results}
In the following, we demonstrate the results of our work on joint data ferrying and cell coverage by numerical simulations. Some qualitative examples are found in Fig.~\ref{fig:slaw} and Fig.~\ref{fig:real}.

\begin{figure}[b]
    \centering
    \hfill{}
    \includegraphics[width=0.99\linewidth]{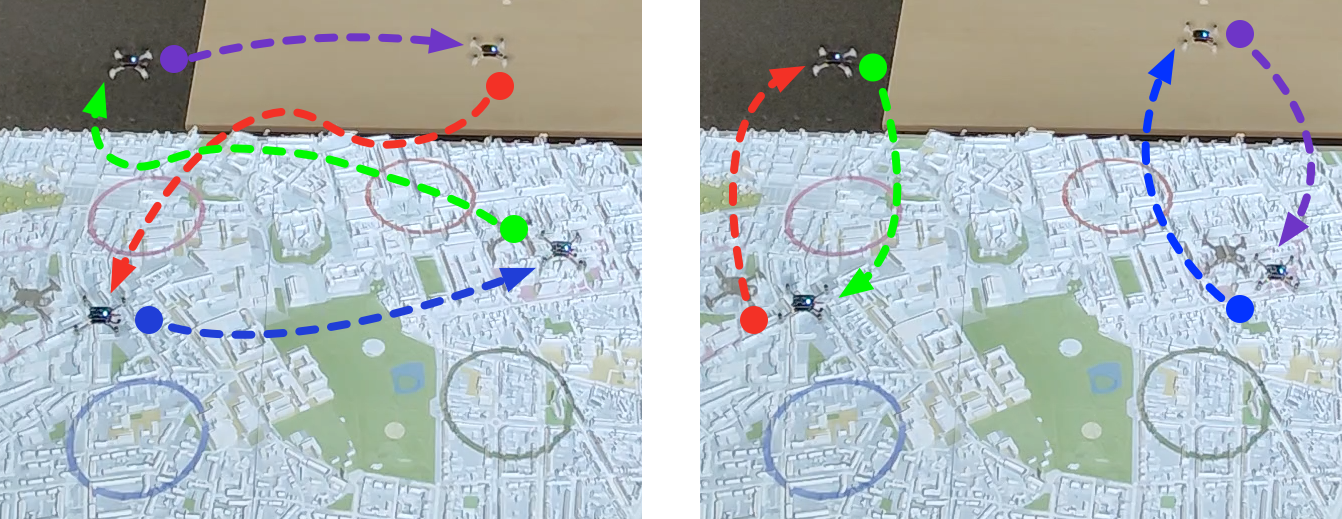}
    \hfill{}
    \caption{Real-world illustrative example implementation with four nano-quadrotors flying over a $1:1000$ city model with color-coded clusters of users on the ground, using artificial potential fields (APF, \cite{khatib1985real}) for collision avoidance. The colored dots mark previous positions of the UAVs, the dashed arrows show the flight trajectories according to binary-jumping. The UAVs spread messages between all clusters in $2$ jumps.}
    \label{fig:real} 
\end{figure}

\paragraph{Evaluation metrics}
We present the results of proposed approaches in comparison to reference schemes, considering: average time to delivery (TTD) for a message to reach its recipient; probability of delivery ($p_{deliver}$) for a message successfully reaching its recipient by end of scenario; time-averaged cell coverage ($\mathrm{\bar c}$); and total distance travelled by UAVs ($D_{tot}$). Since we are unaware of prior algorithms for joint cell coverage and DTN ferrying, as baseline we compare against the RWP heuristic, where UAVs repeatedly travel to a uniformly random point in $\mathcal X$, and the circular heuristic, where UAVs $i$ travel on a circle of radius $r_i = \SI{4}{\kilo\metre} \cdot (i / N)$ around zero. 

\paragraph{Quantitative evaluation}
Overall, in Table~\ref{tab:comparison} we see that our approaches are effective at providing both DTN message delivery and coverage, compared to the heuristics. We find that binary jumping performs best. Due to mobility of nodes and possibly disconnected clusters, in contrast to Proposition~\ref{prop:3} for the static connected case, TSP-based rotation improves over binary-jumping in TTD, though the confidence interval does not allow a definitive conclusion. Meanwhile, binary-jumping excels in travelled distance since less overall jumps mean less passing time between full rotations, for users to move and change the optimal distances \eqref{eq:d1}, \eqref{eq:d2} computed at the beginning of each rotation in the QAP. In Fig.~\ref{fig:theoretical_results}, we see that TTD is improved by using more UAVs, and distance travelled scales inversely with rotation time as expected. Lastly, in Fig.~\ref{fig:variations}, as predicted by Proposition~\ref{prop:2}, high coverage is achieved by tuning the number of UAVs or update interval. Simultaneously, delivery probabilities remain high.

\begin{table*}
    \centering
    \caption{Performance comparison of proposed approaches and reference schemes, with $95\%$ confidence interval.}
    \label{tab:comparison}
    \begin{tabular}{@{}ccccc@{}}
    \toprule
    Algorithm          & Time to delivery TTD          & Delivery probability $p_{\mathrm{deliver}}$ & Average coverage $\mathrm{\bar c}$ & Distance travelled $D_{\mathrm{tot}}$     \\ \midrule
    TSP-based & $\SI{4146}{\second} \pm \SI{310}{\second}$ & $0.831 \pm 0.003$ & $0.77 \pm 0.03$ & $\SI{3055}{\kilo\metre} \pm \SI{164}{\kilo\metre}$ \\
    Binary-jumping & $\SI{4206}{\second} \pm \SI{305}{\second}$ & $0.831 \pm 0.003$ & $0.781 \pm 0.029$ & $\SI{2663}{\kilo\metre} \pm \SI{147}{\kilo\metre}$ \\
    Circular heuristic & $\SI{5088}{\second} \pm \SI{412}{\second}$ & $0.817 \pm 0.006$ & $0.495 \pm 0.014$ & $\SI{7233}{\kilo\metre} \pm \SI{10.3}{\kilo\metre}$ \\
    RWP heuristic & $\SI{4677}{\second} \pm \SI{359}{\second}$ & $0.825 \pm 0.005$ & $0.505 \pm 0.012$ & $\SI{8276}{\kilo\metre} \pm \SI{1.36}{\kilo\metre}$ \\
    No UAVs & $\SI{7124}{\second} \pm \SI{770}{\second}$ & $0.747 \pm 0.02$ & $-$ & $-$ \\
    \bottomrule
    \end{tabular}
\end{table*}

\begin{figure}
    \centering
    \hfill{}
    \includegraphics[width=0.45\linewidth]{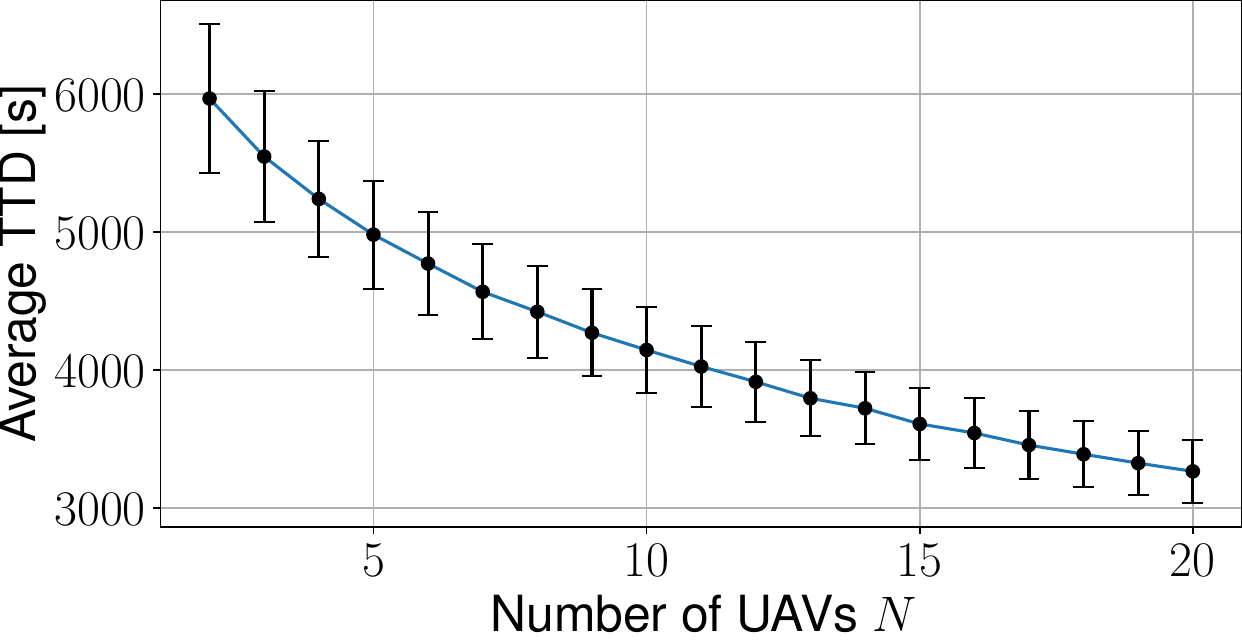}
    \hfill{}
    \includegraphics[width=0.45\linewidth]{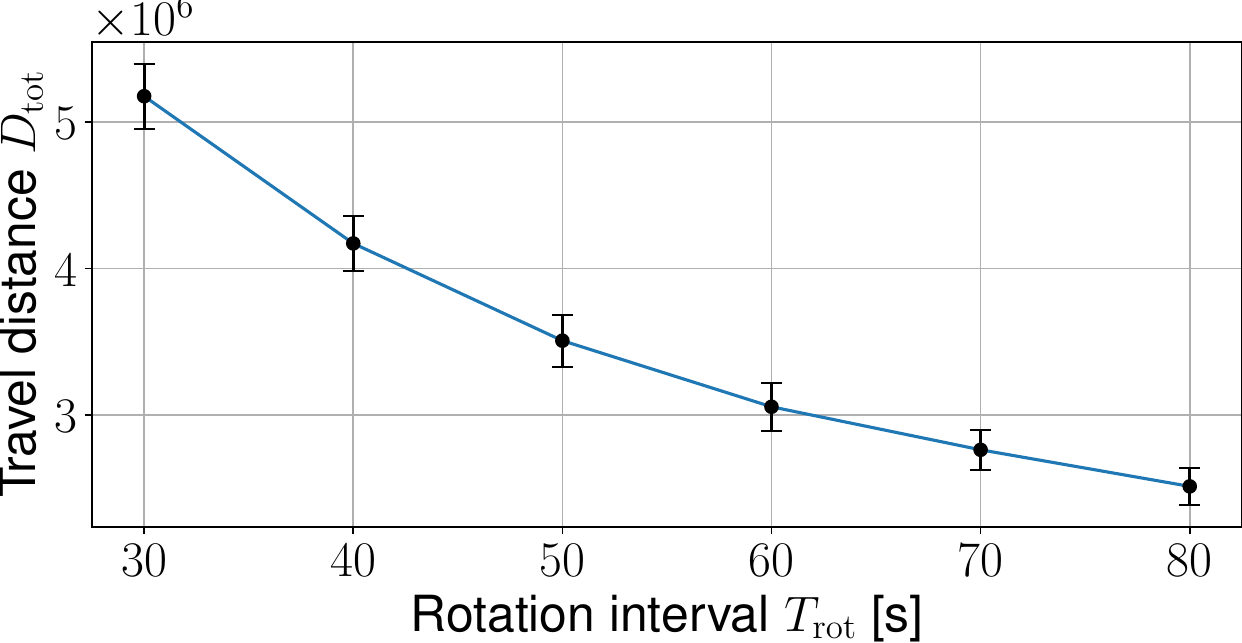}  
    \hfill{}
    \caption{The average TTD over number of UAVs and travelled distance over rotation interval for TSP-based rotation.}
    \label{fig:theoretical_results}
\end{figure}

\begin{figure}[b]
    \centering
    \begin{subfigure}[b]{0.45\linewidth}
    \includegraphics[width=\textwidth]{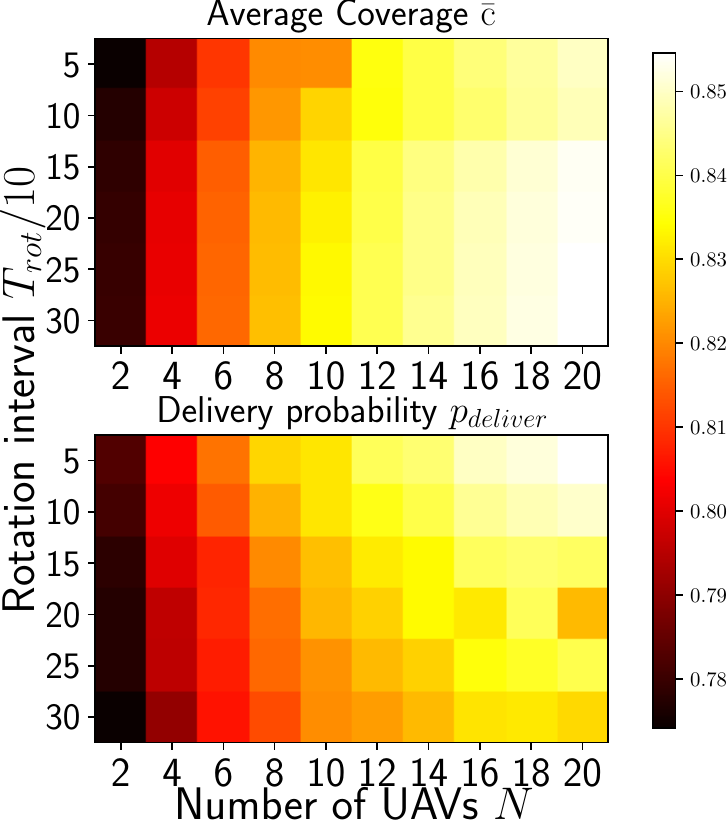}
    \caption{TSP-based}
    \end{subfigure}
    \begin{subfigure}[b]{0.45\linewidth}
    \includegraphics[width=\textwidth]{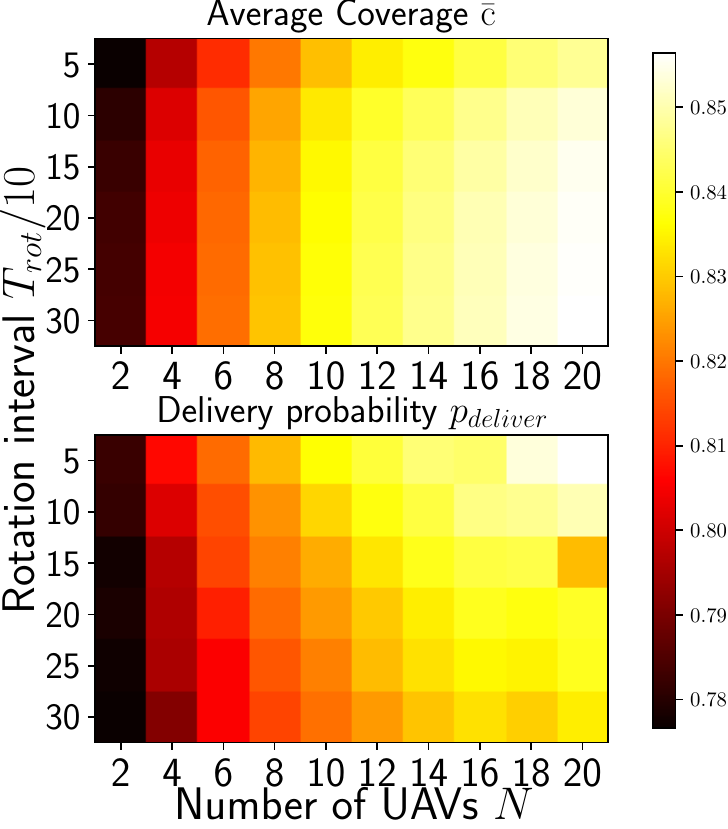}
    \caption{Binary-jumping}
    \end{subfigure}
    \caption{Average coverage and delivery probabilities across number of UAVs and rotation interval.}
    \label{fig:variations}
\end{figure}

\paragraph{Scalability of algorithms}
Finally, we verify the scalability of our proposed approach as discussed in Section~\ref{sec:theo}: In Fig.~\ref{fig:complexity}, the time to run Algorithm~\ref{alg:full} for $100$ epochs $n$ on a $\SI{2.4}{\giga\hertz}$ Quad-Core Intel Core i5 is shown for various numbers of ground users $M$ and UAVs $N$, and already remains feasible for real online scenarios (significantly below $100 \Delta T$). Nonetheless, the algorithm is not parallel and uses NumPy \cite{harris2020array} and POT \cite{flamary2021pot}, the run time of the algorithms scales close-to-linearly with the considered $N$ and $M$. Parallelization and code optimization may further improve the run time. 

\begin{figure}
    \centering
    \includegraphics[width=0.99\linewidth]{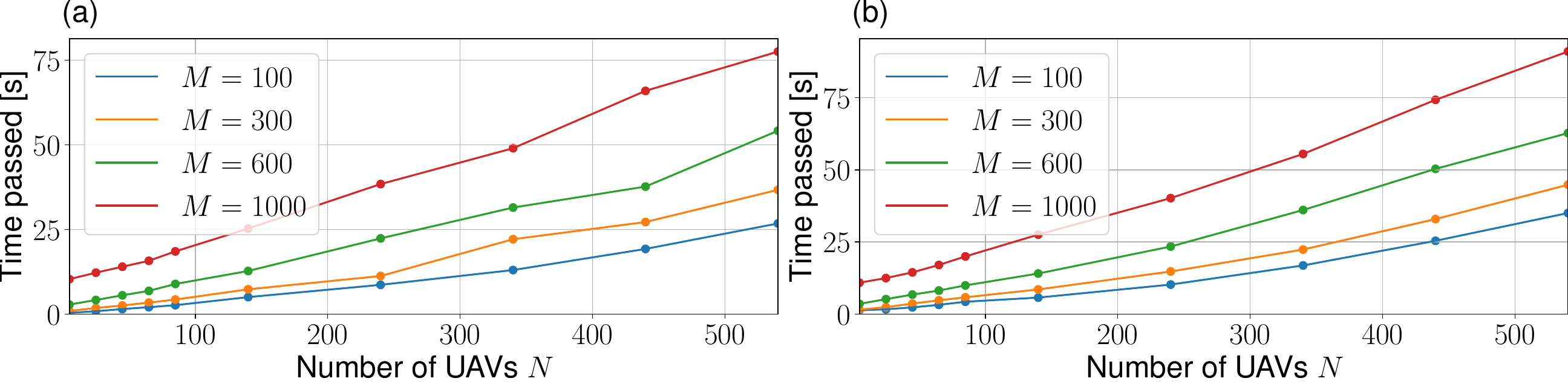}
    \caption{Time to run Algorithm~\ref{alg:full} for $100$ epochs $n$, for various numbers of ground users $M$ and UAVs $N$. The run time scales slowly with $N$ and $M$. (a): TSP-based; (b): binary-jumping.}
    \label{fig:complexity}
\end{figure}

\section{Related Work}
Existing literature has considered UAVs either for data ferrying using DTN or to provide flying base stations for cellular communications.
A vast literature have already been devoted to addressing aerial base stations \cite{mozaffari2017optimal,Zeng19proc}. More recent results have considered on the one hand, advanced physical layer communication techniques such as high-throughput relaying using buffer onboard UAVs \cite{Wang23icc}, directive transmission using high-gain directional antenna and gimbal-assisted adaptive antenna steering \cite{Lu22icc}, beamforming using aerial reconfigurable intelligent surface (RIS) \cite{Quispe21gs}, and exploiting multiple UAVs for cooperative multiple-input multiple-output communications \cite{Xiang20wcnc}. On the other hand, to fully exploit the benefits of these physical layer techniques, joint optimization of UAV deployment, trajectory planning and resource allocation have been considered for optimization of cell coverage \cite{Zhang21tcom}, throughput, energy \cite{ Xiang20wcnc, Lu22icc}, and mission completion time \cite{Wang23icc, Yilmaz22pimrc}. In addition to dynamic trajectory planning \cite{Wang23icc, Yilmaz22pimrc, Xiang20wcnc}, optimal selection of waypoints has been considered in \cite{Lu22icc} to fulfill requirements of e.g. UAV-aided monitoring and inspection, or urban airspace regulation. While UAV trajectory planning usually assumes knowledge of user locations, the impact of uncertain user locations was recently investigated in \cite{Yilmaz22pimrc}. However, other than few exceptions \cite{Wang23icc}, such literature usually does not consider DTN-based data ferrying. 

DTNs and UAVs for data ferrying were considered by other researchers which propose various different flight patterns.
Common approaches include covering areas with a triangle grid for a flight pattern, using genetic algorithms for path planning~\cite{harounabadi_evolutionary_2018}, distributed path planning algorithms mixed with task division~\cite{yoon_adaptive_2017} or specially consider in-transit nodes besides large clusters~\cite{zobel_topology-aware_2020}.
Considering multiple UAVs for data ferrying, Arafat et al.~\cite{arafat_location-aided_2018} propose a location-aided routing that exploits predictions of UAV movements for data forwarding.
Energy consumption is a major factor not only for data transmissions but also for the UAV flight time, thus, Zobel et al.~\cite{zobel_optimizing_2019} optimize inter-cluster data delivery based on communication performance and energy consumption of the UAV itself.
While Lieser et al.~\cite{lieser_simulation_2019} presented a networked UAV simulation framework with some basic flight patterns, similar to the other research lack a large scale comparison of baseline flight patterns as well as advanced ones.
Neither of the previously mentioned works combine data ferrying flight planning with secondary objectives such as providing cell services to ground nodes.

\section{Conclusion}
We have proposed and achieved usage of UAVs for joint DTN-based data ferrying and cellular coverage. Our approach is based on alternating minimization for optimal-transport-based cell coverage, QAP-based optimal data ferrying, and a combination of both in a synchronous algorithm. The approach was verified numerically and against simpler baselines. For future work, one could exploit the mobility of users, as the current approach remains agnostic to user behavior. Battery levels could be considered, e.g. by adding another cluster for recharging UAVs. Lastly, one could investigate the effects of unknown user locations, also by means of mean-field limits.

\section*{Appendix}
\begin{proof}[Proof of Proposition~\ref{prop:1}]
Algorithm~\ref{alg:am} induces a monotonically decreasing cost $J_{\mathbf X(t)}(\mathbf C^{(n)}, \boldsymbol \xi^{(n)})$, since 
\begin{align}
    J_{\mathbf X(t)}(\mathbf C^{(n+1)}, \boldsymbol \xi^{(n+1)}) &\leq J_{\mathbf X(t)}(\mathbf C^{(n+1)}, \boldsymbol \xi^{(n)}) \nonumber\\
    &\leq J_{\mathbf X(t)}(\mathbf C^{(n)}, \boldsymbol \xi^{(n)})
\end{align}
by \eqref{eq:alt1} and \eqref{eq:alt2} or \eqref{eq:alt2p}. By $J_{\mathbf X(t)}(\mathbf C, \boldsymbol \xi) \geq 0$ and monotone convergence, $(J_{\mathbf X(t)}(\mathbf C^{(n)}, \boldsymbol \xi^{(n)}))_{n \in \mathbb N}$ converges.
\end{proof}

\begin{proof}[Proof of Proposition~\ref{prop:2}]
First, there trivially exists $N > 0$ such that the achievable coverage is at least $\frac{1 + p_{\mathrm{cover}}}{2}$, e.g. using $N=M$ and singleton clusters (in practice, $N$ may be lower). As it takes at most $\frac{\operatorname{diam}(\mathcal X)}{v}$ time to reach a new cluster, choose $T_{\mathrm{rot}} = \frac{\operatorname{diam}(\mathcal X)}{v} \cdot k$ for any $k > 1$ to achieve a fraction $\frac{1 + p_{\mathrm{cover}}}{2} \cdot \frac{k-1}{k}$ of time-averaged coverage (as the fraction of time $\frac{\operatorname{diam}(\mathcal X)}{v} \cdot (k-1)$ is spent stationary). Choosing $k$ sufficiently large gives $\frac{1 + p_{\mathrm{cover}}}{2} \cdot \frac{k-1}{k} > p_{\mathrm{cover}}$ as desired.
\end{proof}

\begin{proof}[Proof of Proposition~\ref{prop:3}]
For binary-jumping, it takes at most $T_{\mathrm{rot}} \lceil \log N \rceil$ time until the next round of updates, after a message spawns in cluster $i$. During the new round, the message spreads to all clusters, since for any cluster $j$, there exist $k_1, \ldots, k_{\lceil \log N \rceil} \in \{0, 1\}$ with $\sum_{l=1}^{\lceil \log N \rceil} 2^{k_l} = j-i \mod N$ by binary representation of $j-i \mod N$, and hence there is a route of jumps performed by (possibly different) UAVs from $i$ to $j$. A similar argument holds for TSP-based rotation.
\end{proof}

\bibliographystyle{IEEEtran}
\bibliography{references}

\end{document}